\newcommand{\R}{\mathbb{R}}
\newcommand{\overbar}[1]{\mkern 1.5mu\overline{\mkern-1.5mu#1\mkern-1.5mu}\mkern 1.5mu}
\newtheorem{proposition}{Proposition}
\theoremstyle{definition}
\newtheorem*{remark}{Remark}
\numberwithin{equation}{section}
\title{\Huge{Nonlinear diffusion in\\ relativistic kinetic theory}} \author
 {
       Simone Calogero  \\
       {\small Department of Mathematical Sciences}  \\
       {\small Chalmers University of Technology}  \\
       {\small Gothenburg, Sweden} \\
       }
\date{}
\begin{document}
\maketitle
\begin{abstract}
A nonlinear Lorentz invariant kinetic diffusion equation is introduced, which is consistent with the conservation laws of particles number, energy and momentum. The equilibrium solution converges to the Maxwellian density in the Newtonian limit, but it is not given by the Jüttner distribution commonly employed in relativistic kinetic theory. 
The nonlinear kinetic diffusion equation on a general Lorentzian manifold is consistent with the contracted Bianchi identities and therefore can be coupled to the Einstein equations of general relativity.
\end{abstract}
\section{Introduction}
Consider a system of a large number of particles and let $f(t,{\bm x},{\bm v})$ be the number density of particles with position ${\bm x}\in\R^3$ and velocity ${\bm v}\in\R^3$ at time $t\in\R$. If the only forces acting on the particles are due to random collisions with the molecules of a background medium in thermal equilibrium, the function $f$ solves the linear kinetic diffusion equation
\begin{equation}\label{linkin}
\partial_t f+{\bm v}\cdot\nabla_{\bm x} f=\sigma\Delta_{\bm v} f,
\end{equation}
where $\sigma>0$ is a constant with physical  dimension $[\mathrm{velocity}]^2/[\mathrm{time}]$ (diffusion constant).  Equation~\eqref{linkin} is known by different names in the literature, e.g., kinetic Fokker-Planck or Kramers equation (without friction). 
Furthermore,~\eqref{linkin} is the (forward) Kolmogorov PDE associated to the stochastic process 
\begin{align*}
&{\bm X}(s)={\bm x}+\int_s^t {\bm V}(\tau)\,d\tau,\\
 &{\bm V}(s)={\bm v}+\sqrt{2\sigma}({\bm B}(t)-{\bm B}(s)),
\end{align*}
where ${\bm B}(t)$ is a three-dimensional Brownian motion.
The stochastic process $({\bm X}(s), {\bm V}(s))$ represents the microscopic (random) state of the particles. 
The solution for $t>0$ of~\eqref{linkin} with initial data $f(0,{\bm x}, {\bm v})=f_0({\bm x}, {\bm v})$ is given by Feynman-Kac's formula
\[
f(t,{\bm x},{\bm v})=\mathbb{E}[f_0({\bm X}(0),{\bm V}(0))]=\int_{\R^6}K(t,{\bm x},{\bm v},{\bm y}, {\bm w})f_0({\bm y}, {\bm w})\,d{\bm w}\,d{\bm y},
\]
where $K(t,{\bm x},{\bm v},\cdot)$ is the density of the random variable $({\bm X}(0),{\bm V}(0))$---i.e., the fundamental solution solution of~\eqref{linkin}---which is known explicitly~\cite{Ch}.
When coupled to a suitable mean field---e.g., the self-induced electric field in the case of charged particles---the linear kinetic diffusion equation~\eqref{linkin} provides the foundation for numerous phenomenological models in physics and engineering.

One notable feature of~\eqref{linkin} is that it is invariant by Galilean transformations---a fundamental property for Newtonian mechanical systems. Yet, a questionable feature of~\eqref{linkin} is that it violates the conservation laws of energy and momentum. In Section~\ref{newtonian} we present a well-known {\it nonlinear} version of~\eqref{linkin},  which is Galilean invariant and which preserves the particles number, energy and momentum. In the spatially homogeneous case, the nonlinear model becomes the {\it linear} one found in~\cite{KL} and, after a proper translation in the velocity variable, it reduces to the spatially homogeneous Kramers equation.

Following the pioneering work by Dudley~\cite{D}, several relativistic versions of~\eqref{linkin} have been proposed in the literature; see~\cite{DH, H} and the review~\cite{DH2}. In~\cite{AC} it was shown that the only Lorentz invariant equation among the proposed ones is
\begin{equation}\label{RFP}
\partial_tf+{\bm {\widehat{u}}}\cdot\nabla_{\bm x}f=\sigma\,\nabla_{\bm u}\cdot \left(D\nabla_{\bm u}f\right),
\end{equation}
where ${\bm u}\in\R^3$ is the three-velocity of the particles, and 
\[
{\bm {\widehat{u}}}=\frac{c{\bm u}}{\sqrt{c^2+|{\bm u}|^2}},\quad D=\frac{c^2\mathbb{I}+{\bm u}\otimes{\bm u}}{u^0c}
\]
are, respectively, the relativistic velocity and the diffusion matrix. As~\eqref{linkin},~\eqref{RFP} also violates the conservation laws of energy and momentum.
In Section~\ref{relativistic} we introduce a {\it nonlinear} version of~\eqref{RFP}, which is Lorentz invariant and which preserves the particles number, energy and momentum. In the Newtonian limit $c\to\infty$, the new relativistic model converges to the Newtonian one from Section~\ref{newtonian}. However, in contrast to the latter, the relativistic model does not become linear in the spatially homogeneous case.

Remarkably, the equilibrium distribution $G_\mathrm{eq}$ of the nonlinear Lorentz invariant diffusion equation derived in Section~\ref{relativistic} is not the standard Jüttner distribution universally employed in relativistic kinetic theory. Instead, for a particles distribution with zero average momentum, it is given by
\[
G_\mathrm{eq}({\bm u})=\frac{c^{-3}Z(\lambda)}{(1+c^{-2}|{\bm u}|^2)^{\lambda}},
\]
where $\lambda$ is a dimensionless constant that depends on the speed of light and $Z(\lambda)$ is a dimensionless normalization factor so that $\|G_\mathrm{eq}\|_{L^1}=1$ (partition function).
The probability density $G_\mathrm{eq}({\bm u})$ converges to the standard Maxwellian distribution in the Newtonian limit.

As the conservation law of energy-momentum is a necessary constraint for the existence of solutions to the Einstein equations of general relativity, the kinetic diffusion equation~\eqref{RFP} is incompatible with Einstein's theory of gravity. In~\cite{C} this inconsistency was resolved  
by postulating the existence in spacetime of a cosmological scalar field $\phi$ that transfers energy-momentum to the particles. 
(It was shown in~\cite{CV} that $\phi$ could be identified with the dark energy component of the universe in observational cosmology.)
In Section~\ref{genrelsec} we propose another solution to this  problem; specifically, we show that the nonlinear kinetic diffusion equation introduced in Section~\ref{relativistic} is compatible with the contracted Bianchi identities in Lorentzian geometry and thus can be coupled to the Einstein equations without the need to introduce additional matter fields in spacetime.

\section{Nonlinear kinetic diffusion in Newtonian mechanics}\label{newtonian}
Let $f(t,{\bm x},{\bm v})\geq 0$ be the number density in state space for an isolated system of point particles with mass $m> 0$.
The number density $n_f=n_f(t,{\bm x})$ and the number current density ${\bm j}_f={\bm j}_f(t,{\bm x})$ of particles in space are
\[
n_f=\int  f\,d{\bm v},\quad {\bm j}_f= \int  {\bm v}\, f\,d{\bm v};
\]
integrals with unspecified domain are extended over $\R^3$. 
The kinetic energy density $e_f=e_f(t,{\bm x})$ and the momentum density ${\bm p}_f={\bm p}_f(t,{\bm x})$ in space are
\[
e_f=\frac{m}{2}\int  |{\bm v}|^2f\,d{\bm v},\quad {\bm p}_f=m\int  {\bm v} f\,dv=m\,{\bm j}_f.
\]
The local conservation laws of particles number, energy and momentum read
\begin{equation}\label{consloc}
\partial_t n_f =-\nabla_{\bm x}\cdot {\bm j}_f,\quad \partial_te_f=-\nabla_{\bm x}\cdot {\bm q}_f,\quad \partial_t {\bm p}_f =-\nabla_{\bm x}\cdot S_f,
\end{equation}
where
\[
{\bm q}_f=\frac{m}{2}\int  |{\bm v}|^2 {\bm v} f\,d{\bm v},\quad S_{f}=m\int  {\bm v}\otimes {\bm v} f\,d{\bm v} 
\]
are, respectively, the energy current density and the momentum current density (or stress tensor) in space; we assume that no chemical or other form of reaction occurs among the particles. 

Suppose that $f$ satisfies an equation of the form
\begin{equation}\label{genFP}
\partial_tf+{\bm v}\cdot\nabla_{\bm x}f=\Phi_f\Delta_{\bm v}f+\nabla_{\bm v}\cdot({\bm W}_f f),
\end{equation}
where the diffusion scalar field $\Phi_f=\Phi_f(t,{\bm x})$ and the drift vector field ${\bm W}_f={\bm W}_f(t,{\bm x}, {\bm v})$ depend on $f$. 
We ask for conditions on $\Phi_f$, ${\bm W}_f$ such that~\eqref{genFP} is invariant under the Galilean transformation
\[
(\bar{t},\bar{{\bm x}}, \bar{{\bm v}})=
(t, {\bm x}-{\bm \xi}t, {\bm v}-{\bm \xi}),
\]
where ${\bm \xi}\in\R^3$ is an arbitrary constant vector. For this question to be meaningful, we need to specify how the kinetic density $\bar{f}$ of the new Galilean observer is related to $f$. 
It is natural to assume that
\begin{equation}\label{transf}
\bar{f}(t,{\bm x},{\bm v})=f(\bar{t},\bar{{\bm x}}, \bar{{\bm v}}).
\end{equation}
The transformation~\eqref{transf} ensures that Galilean observers agree on the total number of particles at the given absolute time $t=\bar{t}$:
\[
\int \int \bar{f}(t,{\bm x},{\bm v})\,d{\bm v}\,d{\bm x} = \int \int f(\bar{t},\bar{{\bm x}}, \bar{{\bm v}})\,d{\bm v}\,d{\bm x}=\int \int f(t,{\bm x},{\bm v})\,d{\bm v}\,d{\bm x},
\]
where we used that the measure $d{\bm v}\,d{\bm x}$ is invariant by Galilean transformations. The previous identity holds regardless of whether the total number of particles is constant.
It is now straightforward to verify that~\eqref{genFP} is Galilean invariant (i.e., $\bar{f}(t,{\bm x},{\bm v})$ is a solution) if 
\begin{equation}\label{trans}
\Phi_{\bar{f}}(t,{\bm x})=\Phi_f(\bar{t},\bar{{\bm x}}),\quad
{\bm W}_{\bar{f}}(t,{\bm x},{\bm v})={\bm W}_f(\bar{t},\bar{{\bm x}},\bar{{\bm v}}).
\end{equation} 

The next purpose is to find $\Phi_f$, ${\bm W}_f$ satisfying~\eqref{trans} and such that~\eqref{genFP} implies the validity of the conservation laws~\eqref{consloc}. 
The conservation law of particles number holds for all choices of $\Phi_f, {\bm W}_f$; in fact, integrating by parts in the velocity variable and assuming that all boundary terms at infinity vanish, we find 
\[
\partial_tn_f = \int  \partial_tf\,d{\bm v}=-\int {\bm v}\cdot\nabla_{\bm x}f\,d{\bm v}+\Phi_f\int \Delta_{\bm v}f\,d{\bm v}+\int \nabla_{\bm v}\cdot({\bm W}_f f)\,d{\bm v}=-\nabla_{\bm x}\cdot{\bm j}_f.
\]
Similarly, we obtain
\[
\partial_te_f=-\nabla_{\bm x}\cdot {\bm q}_f+m\left(3n_f\,\Phi_f -\int {\bm W}_f\cdot{\bm v} f\,d{\bm v}\right),\quad \partial_t {\bm p}_f =-\nabla_{\bm x}\cdot S_f-m\int {\bm W}_ff\,d{\bm v}.
\]
Thus, the conservation laws~\eqref{consloc} hold for all solutions $f$ if we set 
\begin{equation}\label{wsigma}
\Phi_f=\frac{\mu}{3n_f}\left(\frac{2e_f}{m}-\frac{|{\bm j}_f|^2}{n_f}\right),\quad  {\bm W}_f=\mu\left({\bm v}-\frac{{\bm j}_f}{n_f}\right),
\end{equation}
where $\mu$ is a constant with physical dimension $[\mathrm{time}]^{-1}$. 
By analogy with the linear Kramers equation in kinetic theory~\cite{K}, we shall refer to $\mu$ as the drag (or viscosity) coefficient. 

\begin{remark}
\textnormal{More generally, the drag coefficient could depend on a spacetime invariant constructed from the kinetic density $f$. For instance, in~\cite[Eq.~(20)] {Villani} it is defined as $\mu\sim (n_f)^\alpha$, where $\alpha\in[0,1]$.}
\end{remark}

It is straightforward to verify that the diffusion scalar field $\Phi_f$ and the drift vector field ${\bm W}_f$ given by~\eqref{wsigma} satisfy~\eqref{trans}. For instance,
\begin{align*}
{\bm W}_{\bar{f}}(t,{\bm x},{\bm v})&=\mu\left({\bm v}-\frac{{\bm j}_{\bar{f}}}{n_{\bar{f}}}\right)=\mu\left({\bm v}-\frac{\displaystyle{\int} {\bm y} \bar{f}(t,{\bm x}, {\bm y})\,d{\bm y}}{\displaystyle{\int} \bar{f}(t,{\bm x}, {\bm y})\,d{\bm y}}\right)=\mu \left({\bm v}-\frac{\displaystyle{\int} {\bm y} f(t,{\bm x}-{\bm \xi}t, {\bm y}-{\bm \xi})\,d{\bm y}}{\displaystyle{\int} f(t,{\bm x}-{\bm \xi}t, {\bm y}-{\bm \xi})\,d{\bm y}}\right)\\
&=\mu \left({\bm v}-\frac{\displaystyle{\int} ({\bm z}+{\bm \xi}) f(t,{\bm x}-{\bm \xi}t, {\bm z})\,d{\bm z}}{\displaystyle{\int} f(t,{\bm x}-{\bm\xi}t, {\bm z})\,d{\bm z}}\right)=\mu \left({\bm v}-{\bm \xi}-\frac{\displaystyle{\int} {\bm z} f(t,{\bm x}-{\bm\xi}t, {\bm z})\,d{\bm z}}{\displaystyle{\int} f(t,{\bm x}-{\bm\xi}t, {\bm z})\,d{\bm z}}\right)\\
&={\bm W}_f(\bar{t},\bar{{\bm x}}, \bar{{\bm v}}),
\end{align*}
and similarly one proves that $\Phi_{\bar{f}}(t,{\bm x})=\Phi_f(\bar{t},\bar{{\bm x}})$. In conclusion, when $\Phi_f$ and ${\bm W}_f$ are given by~\eqref{wsigma}, the nonlinear kinetic diffusion equation~\eqref{genFP} is Galilean variant and implies the conservation laws~\eqref{consloc}.
Moreover, since
\[
\Phi_f=\frac{\mu}{3n_f}\int \left|{\bm v}-(n_f)^{-1}{\bm j}_f\right|^2f\,d{\bm v},
\]
then diffusion scalar field $\Phi_f$ is non-negative.

\subsection*{Spatially homogeneous solutions}
A particle system is spatially homogeneous if there exists a non-negative function $g=g(t,{\bm v})$ such that 
\begin{equation}\label{shf}
f(t,{\bm x},{\bm v})=\frac{g(t,{\bm v})}{\mathrm{Vol}(\Omega)},
\end{equation}
where $\Omega\subset\R^3$ is the (finite) region occupied by the particles. 
Galilean observers agree on the form~\eqref{shf} of the kinetic density, because
\[
\bar{f}(t,{\bm x},{\bm v})=f(\bar{t},\bar{\bm x},\bar{\bm v})=f(t,\bar{\bm x},\bar{\bm v})=\frac{g(t,\bar{\bm v})}{\mathrm{Vol}(\Omega)}:=\frac{\bar{g}(t,\bm v)}{\mathrm{Vol}(\Omega)},
\]
where we used that the volume of the region occupied by the particles is Galilean invariant. 

Replacing~\eqref{shf} in~\eqref{genFP}, we find the following equation on $g$:
\begin{equation}\label{genFPsh}
\partial_tg=\Phi_{g}\Delta_{\bm v} g+\nabla_{\bm v}\cdot({\bm W}_{g}g),
\end{equation}
where $\Phi_g$ and ${\bm W}_g$ are given by~\eqref{wsigma} with
\[
n_g=\int g\,d{\bm v},\quad {\bm j}_g=\int {\bm v}g\,d{\bm v},\quad e_g=\frac{m}{2}\int |{\bm v}|^2g\,d{\bm v}.
\]
Since $n_g$, ${\bm j}_g$, $e_g$ are constant for solutions of~\eqref{genFPsh}, then ${\bm W}_g$ is now a time independent vector field and $\Phi_g$ is a positive constant. It follows in particular that~\eqref{genFPsh} is a {\it linear} equation. Denoting
\[
{\bm j}_g={\bm \omega},\quad \Phi_g=\sigma,
\]
and introducing the probability density 
\[
G(t,{\bm v})=g(t,{\bm v})/n_g,
\]
we may rewrite~\eqref{genFPsh} as
\begin{subequations}\label{genFPshTEMP}
\begin{align}
&\partial_t G=\sigma\Delta_{\bm v} G+\nabla_{\bm v}\cdot(\mu({\bm v}-{\bm \omega})G)\\
&\int  G\,d{\bm v}=1,\quad \int {\bm v}\,G\,d{\bm v}={\bm \omega},\quad \frac{\mu}{3}\int |{\bm v}-{\bm \omega}|^2G\,d{\bm v}=\sigma.
\end{align}
\end{subequations}
\begin{remark}\textnormal{
In~\cite{KL} the authors arrive to the same equation~\eqref{genFPshTEMP} by a different, more fundamental, argument. Specifically, they start from a spatially homogeneous system of $N$ identical particles in a periodic box and with random velocities ${\bm V}=({\bm v_1},{\bm v_2},\dots, {\bm v_N})$ taking value on the manifold $\mathbb{M}$ of constant energy and momentum. The probability density $F^{(N)}({\bm V}, t)$ of the particle system is assumed to satisfy the heat equation on $\mathbb{M}$ and~\eqref{genFPshTEMP} is derived in the limit $N\to\infty$. }
\end{remark}
The time-independent solution of~\eqref{genFPshTEMP} is given by the non-central Maxwellian 
\begin{equation}\label{noncentralmax}
\mathcal{M}^{({\bm \omega})}=\mathcal{M}({\bm v}-{\bm \omega}),
\end{equation}
where
\begin{equation}\label{maxwell}
\mathcal{M}({\bm v}):=\mathcal{M}^{({\bm 0})}({\bm v})=
\left(\frac{\mu}{2\pi \sigma}\right)^{3/2}\exp\left(-\frac{\mu|{\bm v}|^2}{2\sigma}\right)
\end{equation}
is the central Maxwellian density. 
It is well-known that the solutions of~\eqref{genFPshTEMP} converge exponentially fast to the equilibrium~\eqref{noncentralmax}; see e.g.~\cite{AMU} for a proof of this property using the entropy method. It is an interesting open question whether exponential convergence to equilibrium holds for the nonlinear kinetic diffusion equation~\eqref{genFP}. The local asymptotic stability of the Maxwellian distribution~\eqref{maxwell} in the spatially inhomogeneous case has been proved in~\cite{LY}.

\section{Nonlinear kinetic diffusion in special relativity}\label{relativistic}
Consider now a collection of a large number of relativistic particles. Let $x=(x^0,{\bm x})\in\R^4$ be a system of Cartesian coordinates in Minkowski space---i.e., a coordinate system in which the Minkoswki metric $\eta$ has components $\eta_{\alpha\beta}=\mathrm{diag}(-1,1,1,1)$---and denote by $u=(u^0,{\bm u})\in\R^4$ the four-dimensional velocity variable. The state space of each individual particle is therefore the seven-dimensional manifold 
\[
\Pi=\R^4\times\mathcal{U}, \quad \mathcal{U}=\{u\in\R^4: \ u^0=\sqrt{c^2+|{\bm u}|^2}\},
\]
where $c$ is the speed of light. 
The kinetic density $f\geq0$ of relativistic particles is defined on $\Pi$ and thus can be written as a function of $(x,{\bm u})$. However, in order to derive a Lorentz invariant equation on $f$ that generalizes~\eqref{genFP}, it is convenient to start from a density $\mathbbm{f}=\mathbbm{f}(x,u)$ ``off-shell'', i.e., to which the condition $u^0=\sqrt{c^2+|{\bm u}|^2}$ has not yet been imposed. The relation between $f$ and $\mathbbm{f}$ is therefore
\[
f(x, {\bm u})=\mathbbm{f}(x, \sqrt{c^2+|{\bm u}|^2}, {\bm u}).
\]
As a starting point we postulate the following equation on $\mathbbm{f}$:
\begin{equation}\label{eqf*}
u^\alpha\partial_{x^\alpha}\mathbbm{f}=\Phi_\mathbbm{f} \, \eta^{\alpha\beta}\partial_{u^\alpha}\partial_{u^\beta} \mathbbm{f}+\partial_{u^\alpha} (K_\mathbbm{f}^\alpha \, \mathbbm{f}),
\end{equation}
where $\Phi_\mathbbm{f} =\Phi_\mathbbm{f} (x)$ is a scalar field with physical dimension $[\mathrm{velocity}]^2/[\mathrm{time}]$ and $K_\mathbbm{f} =K_\mathbbm{f} (x,u)$ is a four-dimensional vector field with physical dimension $[\mathrm{velocity}]/[\mathrm{time}]$, both of which depend on $\mathbbm{f}$. Consider now the Lorentz transformation
\begin{equation}\label{inv4d}
\bar{x}=\Lambda({\bm\xi}) x, \quad \bar{u}=\Lambda({\bm\xi}) u,\quad \bar{\mathbbm{f}}(x,u)= \mathbbm{f}(\bar{x},\bar{u}),
\end{equation}
where ${\bm \xi}\in\R^3$ and $\Lambda({\bm\xi})$ is the $4\times 4$ matrix
\[
\Lambda({\bm\xi}) =\frac{1}{c}\left(\begin{array}{cc} \displaystyle{\xi^0} & \displaystyle{-{\bm \xi}} \\ \displaystyle{-{\bm \xi}} & \displaystyle{c\mathbb{I}+\frac{{\bm \xi}\otimes{\bm \xi}}{\xi^0+c}}\end{array}\right),\quad \xi^0=\sqrt{c^2+|{\bm \xi}|^2}.
\]
Using the identities
\begin{equation}\label{lambdaid}
\Lambda({\bm\xi})^T\eta\, \Lambda({\bm \xi})=\eta,
\quad \Lambda({\bm \xi})^{-1}=\Lambda(-{\bm \xi}),
\end{equation}
we find that~\eqref{eqf*} is invariant under the Lorentz transformation~\eqref{inv4d} (i.e., $\bar{\mathbbm{f}}(x,u)$ is a solution)
provided $\Phi_\mathbbm{f} ,K_\mathbbm{f} $ transform according to
\begin{equation}\label{transphik}
\Phi_{\bar{\mathbbm{f}}}(x)=\Phi_\mathbbm{f} (\bar{x}),\quad {K}_{\bar{\mathbbm{f}}}(x,u) =\Lambda(-{\bm \xi}) K_\mathbbm{f} (\bar{x},\bar{u}).
\end{equation}
In the following discussion we simplify the notation by writing $(\Phi,K)$ instead of $(\Phi_\mathbbm{f},K_\mathbbm{f})$. 
Assume that $K $ is tangent to $\Pi$; that is, $K ^\alpha u_\alpha=0$. Equivalently,
\begin{equation}\label{K0}
K^0=\frac{{\bm K} \cdot{\bm u}}{u^0},
\end{equation}
where ${\bm K} =(K ^1,K ^2,K ^3)\in\R^3$ denotes the spatial part of $K $. By~\eqref{K0}, the projection of~\eqref{eqf*} on $\Pi$ gives the following equation on $f$:
\begin{equation}\label{eqf}
\sqrt{c^2+|{\bm u}|^2}\partial_{x^0}f+{\bm u}\cdot\nabla_{{\bm x}}f=\Phi \Delta_{{\bm u}}^{(h)}f+\nabla_{{\bm u}}^{(h)}\cdot({\bm K}f),
\end{equation} 
where  $\nabla_{{\bm u}}^{(h)}\cdot $, $\Delta_{{\bm u}}^{(h)}$ denote, respectively, the divergence and Laplace-Beltrami operators associated to the Riemannian metric $h$  induced by $\eta$ on the hyperboloid $u^0=\sqrt{c^2+|{\bm u}|^2}$. In the coordinates $u^1,u^2,u^3$ we have
\[
h_{ij}=\delta_{ij}-\frac{u_iu_j}{c^2+|{\bm u}|^2}
\]
and 
\[
\nabla_{{\bm u}}^{(h)}\cdot({\bm K}f)=\frac{1}{\sqrt{|h|}}\partial_{u^i}\left(\sqrt{|h|}\,K^i f\right),\quad
\Delta_{{\bm u}}^{(h)}f=\frac{1}{\sqrt{|h|}}\partial_{u^i}\left(\sqrt{|h|}(h^{-1})^{ij}\partial_{u^j}f\right),
\]
where $|h|=\det(h_{ij})=c^2(c^2+|{\bm u}|^2)^{-1}$ and $(h^{-1})^{ij}$ is the matrix inverse of $h_{ij}$, that is 
\begin{equation}\label{inverseh}
(h^{-1})^{ij}=\delta^{ij}+\frac{u^iu^j}{c^2}.
\end{equation}
It follows that~\eqref{eqf} is
\begin{equation}\label{eqf2}
u^\alpha\partial_{x^\alpha}f=u^0\partial_{u^i}\left(\Phi\frac{(h^{-1})^{ij}}{u^0}\partial_{u^j}f+\frac{K^i}{u^0}f\right),
\end{equation}
where from now on it is understood that $u^0=\sqrt{c^2+|{\bm u}|^2}$.
The next goal is to find $\Phi,K$ satisfying~\eqref{transphik} and such that~\eqref{eqf} implies the validity of the local conservation laws of particles number, energy and momentum. 

The local conservation law of particles number can be expressed as
\begin{equation}\label{consn}
\partial_{x^\alpha}N^\alpha=0,
\end{equation}
where
\begin{equation}\label{N}
N^\alpha(x) =\int u^\alpha f(x,{\bm u})\frac{d{\bm u}}{u^0}
\end{equation}
is the four-dimensional number current density. It is easy to see that~\eqref{consn} holds for any choice of $\Phi,K$. Indeed we have
\[
\partial_{x^\alpha}N^\alpha
=\int u^\alpha \partial_{x^\alpha} f\,\frac{d{\bm u}}{u^0}=\int \partial_{u^i}\left[\frac{\Phi}{u^0}\tensor{(h^{-1})}{^i_j}\partial_{u^j}f+\frac{K^i}{u^0}f\right]d{\bm u}.
\]
Hence, $\partial_{x^\alpha}N^\alpha=0$, provided $f$ and $|\nabla_{{\bm u}}f|$ decay to zero sufficiently fast as $|{\bm u}|\to\infty$.

We now turn the attention to the conservation law of energy-momentum, which we write in the form
\begin{equation}\label{consT}
\partial_{x^\beta}T^{\alpha\beta}=0,
\end{equation}
where
\begin{equation}\label{T}
T^{\alpha\beta}(x)=mc\int u^\alpha u^\beta f(x,{\bm u})\frac{d{\bm u}}{u^0}
\end{equation}
is the stress-energy(-momentum) tensor.  Computing the left hand side of~\eqref{consT}, using~\eqref{eqf2} and integration by parts, we obtain 
\begin{align*}
&\partial_{x^\beta}T^{0\beta}=mc\left[\frac{3\Phi}{c^2}\int f\,d{\bm u}-\int \frac{{\bm K}\cdot{\bm u}}{(u^0)^2}\,f\,d{\bm u}\right],\\
&\partial_{x^\beta}T^{i\beta}=mc\left[\frac{3\Phi}{c^2}\int \frac{u^i}{u^0}\,f\,d{\bm u}-\int \frac{K^i}{u^0}\,f\,d{\bm u}\right].
\end{align*}
The local conservation law of energy $\partial_{x^\beta}T^{0\beta}=0$ entails
\begin{equation}\label{sigma1}
\Phi=\frac{c^2}{3}\left(\int f\,d{\bm u}\right)^{-1}\int \frac{{\bm K}\cdot{\bm u}}{(u^0)^2}\,f\,d{\bm u}
\end{equation}
and thus the local conservation law of momentum $\partial_{x^\beta}T^{i\beta}=0$ gives
\[
\frac{N^i}{N^0}\int \frac{{\bm K}\cdot{\bm u}}{(u^0)^2}\,f\,d{\bm u}-\int \frac{K^i}{u^0}\,f\,d{\bm u}=0.
\]
We rewrite the last equation as
\begin{equation}\label{tempimp}
\int \frac{\tensor{A}{^i_j}K^j}{u^0}\,f\,d{\bm u}=0,\quad \tensor{A}{^i_j}=\tensor{\delta}{^i_j}-\frac{N^i}{N^0} \frac{u_j}{u^0}.
\end{equation}
The simplest choice of $K$ that is consistent with~\eqref{tempimp} and (a posteriori) with~\eqref{transphik} is
\[
K^i=\mu \tensor{(A^{-1})}{^i_j}\left(u^j-\frac{u^0}{N^0}N^j\right),
\]
where $\mu$ is the drag coefficient; the constant $\mu$ is introduced in the previous equation for dimensional reasons.
Using that the matrix inverse of $\tensor{A}{^i_j}$ is 
\[
\tensor{(A^{-1})}{^i_j}=\tensor{\delta}{^i_j}-\frac{N^iu_j}{N^\alpha u_\alpha},
\]
we find
\begin{equation}\label{ki}
K^i=\mu\left(u^i+\frac{c^2 N^i}{N^\alpha u_\alpha}\right).
\end{equation}
Using~\eqref{K0} we obtain
\[
K^0=\frac{{\bm K}\cdot {\bm u}}{u^0}=\mu\left(u^0+\frac{c^2N^0}{N^\alpha u_\alpha}\right).
\] 
Hence, the vector field $K$ in~\eqref{eqf*} is
\[
K=\mu\left(u+\frac{c^2N}{N^\alpha u_\alpha}\right),
\]
while the diffusion scalar field~\eqref{sigma1} is
\begin{equation}\label{sigma2}
\Phi=\frac{c^2}{3N^0}\int \frac{K^0}{u^0}\,f\,d{\bm u}=\frac{c^2\mu}{3}\left(1+\int \frac{c^2f}{N^\alpha u_\alpha}\frac{d{\bm u}}{u^0}\right).
\end{equation}
In conclusion, our proposal for the relativistic generalization of~\eqref{genFP} is Equation~\eqref{eqf2} with~\eqref{ki} and~\eqref{sigma2} substituted in. It remains to show that the found equation is indeed Lorentz invariant.
\begin{proposition}\label{inv4dprop}
$\Phi$, $K$ satisfy the transformation laws~\eqref{transphik}; in particular,~\eqref{eqf*}, and thus also~\eqref{eqf2}, are invariant under Lorentz transformations.
\end{proposition}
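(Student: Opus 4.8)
The plan is to verify the two identities in~\eqref{transphik} directly from the explicit expressions~\eqref{ki} and~\eqref{sigma2}. Throughout, let $\bar{N}^\alpha$, $\bar\Phi$, $\bar{K}$ denote the quantities~\eqref{N},~\eqref{sigma2},~\eqref{ki} built from $\bar{\mathbbm{f}}$ (equivalently, from its on-shell restriction $\bar f$); the goal is then $\bar\Phi(x)=\Phi(\bar x)$ and $\bar{K}(x,u)=\Lambda(-{\bm\xi})K(\bar x,\bar u)$. The only non-elementary input is the standard fact that the measure $d{\bm u}/u^0$ on the mass shell $\mathcal{U}$ is invariant under the boost $\Lambda({\bm\xi})$: this holds because $\det\Lambda({\bm\xi})=1$ makes the Lebesgue measure $d^4u$ invariant while $\Lambda({\bm\xi})$ maps $\mathcal{U}$ onto itself, so the measure it induces there is preserved. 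Granting this, everything reduces to showing that $N$ transforms as a contravariant four-vector and that the Minkowski scalar $N^\alpha u_\alpha$ is left invariant.

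First I would compute $\bar{N}$. On-shell one has $\bar f(x,{\bm u})=f(\bar x,{\bm w})$, where ${\bm w}$ is the spatial part of $w:=\Lambda({\bm\xi})u$ with $u=(\sqrt{c^2+|{\bm u}|^2},{\bm u})$. Substituting into~\eqref{N} and changing variables ${\bm u}\mapsto{\bm w}$ (so that $d{\bm u}/u^0=d{\bm w}/w^0$ by the previous remark, and $u^\alpha=\tensor{\Lambda(-{\bm\xi})}{^\alpha_\beta}w^\beta$ by~\eqref{lambdaid}) gives
\begin{equation*}
\bar{N}^\alpha(x)=\tensor{\Lambda(-{\bm\xi})}{^\alpha_\beta}\,N^\beta(\bar x).
\end{equation*}
Next, from $\Lambda({\bm\xi})^T\eta\,\Lambda({\bm\xi})=\eta$ and $\Lambda({\bm\xi})^{-1}=\Lambda(-{\bm\xi})$ one also gets $\Lambda(-{\bm\xi})^T\eta\,\Lambda(-{\bm\xi})=\eta$; combining this with $u=\Lambda(-{\bm\xi})\bar u$ yields the invariance of the pairing,
\begin{equation*}
\bar{N}^\alpha(x)\,u_\alpha=\bigl(\Lambda(-{\bm\xi})N(\bar x)\bigr)^\alpha\bigl(\Lambda(-{\bm\xi})\bar u\bigr)_\alpha=N^\alpha(\bar x)\,\bar u_\alpha .
\end{equation*}

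The final step is to insert these two identities into~\eqref{ki} and~\eqref{sigma2}. Using $u=\Lambda(-{\bm\xi})\bar u$ as well,
\begin{equation*}
\bar{K}(x,u)=\mu\Big(u+\frac{c^2\bar{N}(x)}{\bar{N}^\alpha(x)u_\alpha}\Big)=\mu\,\Lambda(-{\bm\xi})\Big(\bar u+\frac{c^2N(\bar x)}{N^\alpha(\bar x)\bar u_\alpha}\Big)=\Lambda(-{\bm\xi})\,K(\bar x,\bar u),
\end{equation*}
which is the second identity in~\eqref{transphik} (and incidentally re-confirms that $K^\alpha u_\alpha=0$ is preserved, since $\Lambda(-{\bm\xi})$ preserves $\eta$ and $u=\Lambda(-{\bm\xi})\bar u$). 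For the diffusion field I use~\eqref{sigma2} in the form $\Phi=\tfrac{c^2\mu}{3}\bigl(1+\int c^2 f\,(N^\alpha u_\alpha)^{-1}\,d{\bm u}/u^0\bigr)$; replacing $f$ by $\bar f$, changing variables ${\bm u}\mapsto{\bm w}$ exactly as above, and using the pairing identity (now with the integration variable in place of $u$, so that $\bar{N}^\alpha(x)u_\alpha=N^\alpha(\bar x)w_\alpha$), the integral becomes the corresponding one for $f$ evaluated at $\bar x$, so $\bar\Phi(x)=\Phi(\bar x)$. This establishes~\eqref{transphik}; by the computation preceding the proposition, $\bar{\mathbbm{f}}(x,u)$ is then a solution of~\eqref{eqf*}, and since~\eqref{eqf2} is the on-shell projection of~\eqref{eqf*}, it is Lorentz invariant as well.

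The main difficulty is bookkeeping rather than substance: one must keep careful track of which matrix --- $\Lambda({\bm\xi})$ or $\Lambda(-{\bm\xi})$ --- acts on which object, handle raised and lowered indices consistently in the Minkowski pairings, and, if one wants a fully self-contained argument, verify by an explicit Jacobian computation that $d{\bm u}/u^0$ is invariant under this particular family of boosts.
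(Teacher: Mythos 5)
Your argument is correct and follows essentially the same route as the paper's proof: both reduce everything to the facts that $N$ transforms as a four-vector (via the Lorentz-invariance of the measure $d{\bm u}/u^0$ on $\mathcal{U}$ and a change of variables) and that the Minkowski pairing $N^\alpha u_\alpha$ is invariant, then substitute into~\eqref{ki} and~\eqref{sigma2}. The only cosmetic difference is that the paper phrases the computation off-shell in terms of $\mathbbm{f}$ and the invariant measure $du$ on $\mathcal{U}$, whereas you work on-shell with $d{\bm u}/u^0$ explicitly; these are the same thing.
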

\begin{proof}
To prove the claim, we rewrite $\Phi$ and $K$ in terms of $\mathbbm{f}$ (instead of $f$). The definition~\eqref{N} of $N$ is equivalent to 
\[
N(x)=\int_\mathcal{U} u\,\mathbbm{f}(x, u)\,du,
\]
where $\mathcal{U}=\{u\in \R^4:u^0=\sqrt{c^2+|{\bm u}|^2}\}$. By the change of variable $u=\bar{v}=\Lambda({\bm\xi}) v$ we obtain
\[
N(\bar{x})=\int_\mathcal{U} \Lambda({\bm\xi}) v\,\mathbbm{f}(\bar{x}, \bar{v})\,dv=\Lambda({\bm\xi})\int_\mathcal{U} v\,\bar{\mathbbm{f}}(x,v)\,dv:=\Lambda({\bm\xi}) \overbar{N}(x),
\] 
where we used that $\det\Lambda({\bm\xi})=1$, and that $\mathcal{U}$ is invariant under Lorentz transformations.
Hence, using~\eqref{lambdaid},
\begin{align*}
K_\mathbbm{f} (\bar{x},\bar{u})&=\mu \left(\bar{u}+\frac{c^2N(\bar{x})}{N(\bar{x})^\alpha \bar{u}_\alpha}\right)=\mu\Lambda({\bm\xi})\left(u+\frac{c^2\overbar{N}(x)}{\overbar{N}(x)^\alpha u_\alpha}\right)=\Lambda({\bm\xi})K_{\bar{\mathbbm{f}}}(x,u)\\
&\Rightarrow K_{\bar{\mathbbm{f}}}(x,u)=\Lambda(-{\bm \xi})K_\mathbbm{f} (\bar{x},\bar{u}).
\end{align*}
The claim on $\Phi$ is proved likewise:
\begin{align*}
\Phi_\mathbbm{f} (\bar{x})&=\frac{c^2\mu}{3}\left(1+\int_\mathcal{U} \frac{c^2\mathbbm{f}(\bar{x},u)}{N(\bar{x})^\alpha u_\alpha}\,du\right)=\frac{c^2\mu}{3}\left(1+\int_\mathcal{U} \frac{c^2\mathbbm{f}(\bar{x},\bar{v})}{N(\bar{x})^\alpha \bar{v}_\alpha}\,d v\right)\\
&=\frac{c^2\mu}{3}\left(1+\int_\mathcal{U} \frac{c^2\bar{\mathbbm{f}}(x,v)}{\overbar{N}(x)^\alpha v_\alpha}\,dv\right)=\Phi_{\bar{\mathbbm{f}}}(x).
\end{align*}
\end{proof}

\subsection{3+1 formulation and Newtonian limit}
We start this section by rewriting~\eqref{eqf2} in a form similar to the Newtonian equation~\eqref{genFP}.
Let us introduce the time variable $t=x^0/c$. Denote $f(c t,{\bm x},{\bm u})$ simply by $f(t,{\bm x}, {\bm u})$ and define
\[
n_f(t,{\bm x})=\int f(t,{\bm x},{\bm u}) \, d{\bm u},\quad {\bm j}_f(t,{\bm x})=\int  {\bm {\widehat{u}}}f(t,{\bm x},{\bm u})\,d{\bm u},
\]
where 
\[
{\bm {\widehat{u}}}=\frac{c{\bm u}}{u^0},\quad u^0=\sqrt{c^2+|{\bm u}|^2},
\]
is the relativistic velocity. The conservation law~\eqref{consn} of particles number is equivalent to
\[
\partial_t n_f=-\nabla_{\bm x}\cdot{\bm j}_f.
\]
Let us also introduce the non-negative function $\beta_f=\beta_f(t,{\bm x},{\bm u})$ by
\[
\beta_f=\frac{c\,n_fu^0-{\bm j}_f\cdot{\bm u}}{c^2}.
\]
Equivalently,
\begin{equation}\label{epsilon2}
\beta_f(t,{\bm x},{\bm u})=c^{-2}\int f(t,{\bm x},{\bm v})(cu^0-\frac{c{\bm v}}{v^0}\cdot{\bm u})\,d{\bm v}=c^{-1}\int f(t,{\bm x},{\bm v})\frac{|u^\mu v_\mu|}{v^0}\,d{\bm v},
\end{equation}
where $v=(v^0,{\bm v})$, $v^0=\sqrt{c^2+|{\bm v}|^2}$, and where for the second equality we used that the Lorentzian scalar product of timelike vectors is negative.
In terms of these new variables, the relativistic kinetic diffusion equation~\eqref{eqf2} takes the final form
\begin{equation}\label{FPrel}
\partial_tf+{\bm {\widehat{u}}}\cdot\nabla_{\bm x}f=\Phi_f\,\nabla_{\bm u}\cdot\left(D\nabla_{\bm u}f\right)+\nabla_{{\bm u}}\cdot({\bm W}_ff),
\end{equation}
where the drift vector field ${\bm W}_f={\bm W}_f(t,{\bm x}, {\bm u})$ is given by
\[
{\bm W}_f=
\frac{c\mu}{u^0}\left({\bm u}-\frac{{\bm j}_f}{\beta_f}\right),
\]
the diffusion matrix $D=D({\bm u})$ is 
\[
D=\frac{c^2\mathbb{I}+{\bm u}\otimes {\bm u}}{u^0c},
\]
and the diffusion scalar field $\Phi_f=\Phi_f(t,{\bm x})$ is
\begin{equation}\label{phif}
\Phi_f=\frac{c^2\mu}{3}\left(1-\int \frac{cf}{\beta_f}\frac{d{\bm u}}{u^0}\right).
\end{equation}
In the Newtonian limit $c\to\infty$ we have ${\bm {\widehat{u}}}\to {\bm u}$ and $D\to \mathbb{I}$. Moreover,
\[
\frac{1}{\beta_f}=\frac{c^2}{c\,n_fu^0-{\bm j}_f\cdot{\bm u}}=\frac{1}{n_f}+\frac{{\bm j}_f\cdot{\bm u}-n_f|{\bm u}|^2}{c^2n_f^2}+O(c^{-4})
\]
and therefore
\[
{\bm W}_f\to\mu\left({\bm u}-\frac{{\bm j}_f}{n_f}\right),\quad \Phi_f\to\frac{\mu}{3n_f}\left(\int |{\bm u}|^2f\,d{\bm u}-\frac{|{\bm j}_f|^2}{n_f}\right),\quad \text{as $c\to\infty$,}
\]
in agreement with~\eqref{wsigma}.

By straightforward calculations one can show that the fields ${\bm W}_f$, $\Phi_f$ satisfy the identities
\begin{equation}\label{identities}
(i) \int {\bm W}_f f\,d{\bm u}=\frac{3\Phi_f}{c^2}{\bm j}_f,\quad (ii) \int \frac{{\bm W}_f\cdot{\bm u}}{u^0}f\,d{\bm u}=\frac{3\Phi_f}{c}n_f,\quad 
(iii)\ {\bm W}_f=D\nabla_{{\bm u}}\left(c^2\mu\log\beta_f\right).
\end{equation}
The identities $(i)$-$(ii)$ can be used to prove the conservation law~\eqref{consT} of energy-momentum in the form
\[
\partial_t e_f=-\nabla_{\bm x}\cdot{\bm q}_f,\quad \partial_t {\bm p}_f=-\nabla_{\bm x}\cdot S_f,
\]
where
\[
e_f=mc\int u^0f\,d{\bm u},\quad  {\bm q}_f=mc^2\int {\bm u}f\,d{\bm u},\quad
{\bm p}_f={\bm q}_f/c^2,\quad S_f=mc\int \frac{{\bm u}\otimes {\bm u}}{u^0}f\,d{\bm u}.
\]
For instance,
\begin{align*}
\partial_te_f=mc\int u^0\partial_t f\,d{\bm u}&=-mc^2\int {\bm u}\cdot\nabla_{\bm x}f\,d{\bm u}+mc\,\Phi_f\int u^0\nabla_{{\bm u}}(D\nabla_{{\bm u}}f)\,d{\bm u}\\
&\quad +mc\int u^0\nabla_{\bm u}\cdot({\bm W}_ff)\,d{\bm u}\\
&=-\nabla_{\bm x}\cdot{\bm q}_f-mc\,\Phi_f\int\frac{D{\bm u}}{u^0}\cdot\nabla_{{\bm u}}f-mc\int\frac{{\bm u}}{u^0}\cdot{\bm W}_ff\,d{\bm u}\\
&=-\nabla_{\bm x}\cdot{\bm q}_f+mc\left(\frac{3\Phi_f n_f}{c}-\int\frac{{\bm u}}{u^0}\cdot{\bm W}_ff\,d{\bm u}\right),
\end{align*}
where we used that $D{\bm u}=c^{-1}u^0{\bm u}$. Hence, the conservation law of energy $\partial_te_f=-\nabla_{\bm x}\cdot{\bm q}_f$ follows by the identity $(ii)$ in~\eqref{identities}. Similarly, one proves the conservation law of momentum $ \partial_t {\bm p}_f=-\nabla_{\bm x}\cdot S_f$ using the identity $(i)$ in~\eqref{identities}. An important difference with the Newtonian case is that the current density ${\bm j}_f$ and the momentum density ${\bm p}_f$ are no longer parallel vector fields. In particular, ${\bm j}_f$ does not satisfy a conservation law. 

To conclude this section we prove that the diffusion scalar field~\eqref{phif} is non-negative and bounded from above. Specifically, 
\[
0\leq \Phi_f(t,{\bm x})\leq \frac{c^2\mu}{3}.
\]
The upper bound is obvious, as $\beta_f\geq0 $. 
For the lower bound we use that $(u^\mu v_\mu)^2\geq (u^\mu u_\mu) (v^\mu v_\mu)$ holds for all timelike vectors $u,v$, with equality if and only if $u$ and $v$ are parallel; see~\cite[Theorem 1.4.1]{N}. In particular, 
\begin{equation}\label{belowIMP}
|u^\mu v_\mu|\geq c^2\quad \text{for all $u,v\in\mathcal{U}$},
\end{equation}
and $|u^\mu v_\mu|=c^2$ if and only if $u=v$.
By~\eqref{epsilon2} and~\eqref{belowIMP}, 
\[
\beta_f(t,{\bm x},{\bm u})\geq \int c f(t,{\bm x},{\bm v})\,\frac{d{\bm v}}{v^0},
\]
with equality for all $(t,{\bm x}, {\bm u})$ if and only if $f\equiv 0$ 
(because the set of ${\bm v}\in\R^3$ such that $v=u$ has zero Lebesgue measure). 
Hence,
\[
\Phi_f(t,{\bm x})=\frac{c^2\mu}{3}\left(1-\int \frac{cf(t,{\bm x},{\bm u})}{\beta_f(t,{\bm x},{\bm u})}\frac{d{\bm u}}{u^0}\right)\geq \frac{c^2\mu}{3}\left(1-\frac{\displaystyle{\int cf(t,{\bm x},{\bm u})\,\frac{d{\bm u}}{u^0}}}{\displaystyle{\int c f(t,{\bm x},{\bm v})\,\frac{d{\bm v}}{v^0}}}\right)=0.
\]

\subsection{Spatially homogeneous solutions}\label{SHsec}
As the Lorentz transformation of the time variable $t$ introduces a dependence on the space variable ${\bm x}$,  a kinetic density $f$ cannot have the form~\eqref{shf} for all Lorentzian observers. The definition of spatially homogeneous particle system in the relativistic case is that there exists a Lorentzian observer $O$ such that  
\[
f(t,{\bm x}, {\bm u})=g(t,{\bm u})/\mathrm{Vol}(\Omega),
\]
where $\Omega$ is the finite region occupied by the particles (according to $O$). The observer $O$ is defined up to time translations and spatial rotations.

In terms of the probability density 
\[
G(t,{\bm u})=g(t,{\bm u})/n_g,\quad \int G\,d{\bm u}=1,
\]
the diffusion equation~\eqref{FPrel} for a spatially homogeneous particle system in the frame of the observer $O$ becomes
\begin{equation}\label{FPrelHOM2}
\partial_tG=\Phi_G\,\nabla_{\bm u}\cdot\left(D\nabla_{\bm u}G\right)+\nabla_{{\bm u}}\cdot({\bm W}_G\,G),
\end{equation}
where 
$\Phi_G=\Phi_G(t)$, ${\bm W}_G={\bm W}_G(t,{\bm u})$ are given by
\begin{align*}
&\Phi_G=\frac{c^2\mu}{3}\left(1-\int \frac{c\, G}{\beta_G}\frac{d{\bm u}}{u^0}\right),\quad{\bm W}_G=\frac{c\mu}{u^0}\left({\bm u}-\frac{{\bm j}_G}{\beta_G}\right),\\
 &\beta_G=\frac{c u^0-{\bm j}_G\cdot{\bm u}}{c^2},\quad {\bm j}_G=\int  {\bm {\widehat{u}}}\,G\, d{\bm u}.
\end{align*}
\eqref{FPrelHOM2} is the relativistic generalization of~\eqref{genFPshTEMP}. 
However, contrary to~\eqref{genFPshTEMP},~\eqref{FPrelHOM2} is not linear and does not preserve the current density ${\bm j}_G$. 
The average energy $e_G$ and momentum ${\bm p}_G$ of the particles are given by
\[
e_G=mc\int u^0 G\, d{\bm u},\quad {\bm p}_G=m\int {\bm u}\, G\, d{\bm u}
\]
and are preserved by smooth solutions of~\eqref{FPrelHOM2}. 

Next we derive the time independent solution $G_\mathrm{eq}({\bm u})$ of~\eqref{FPrelHOM2} and show that $G_\mathrm{eq}$ converges, as $c\to\infty$, to the (non-central) Maxwellian density; see Proposition~\ref{convMax}. Let $\sigma>0$ and ${\bm y}\in\R^3$ be fixed; we start by looking to the solution $G_\mathrm{eq}({\bm u})$ of~\eqref{FPrelHOM2} such that 
\begin{equation}\label{sigmay}
\Phi_{G_\mathrm{eq}}=\sigma,\quad {\bm j}_{G_{\mathrm{eq}}}={\bm y}.
\end{equation}
By the definition of $\Phi_{G_\mathrm{eq}}$, the first condition in~\eqref{sigmay} is equivalent to
\[
\int \frac{c\, G_\mathrm{eq}}{\beta_G}\frac{d{\bm u}}{u^0}=1-\frac{3\sigma}{c^2\mu}
\]
and therefore we require
\[
\sigma<\frac{c^2\mu}{3}.
\]
Using the identity $(iii)$ in~\eqref{identities}, we find that $G_\mathrm{eq}$ satisfies the equation
\begin{equation}\label{eqeq}
\nabla_{\bm u}\cdot \left[DG_\mathrm{eq}\nabla_{{\bm u}}\left(\log G_\mathrm{eq}+2\lambda\log\left(c^{-1}u^0-c^{-2}{\bm y}\cdot {\bm u}\right)\right)\right]=0,
\end{equation}
where
\[
\lambda=\frac{c^2\mu}{2\sigma}>\frac{3}{2}.
\]
By~\eqref{eqeq}, the equilibrium solution $G_\mathrm{eq}=G_\mathrm{eq}^{(\lambda,{\bm y})}$ is given by
\begin{subequations}\label{eqsol}
\begin{equation}
G_\mathrm{eq}^{(\lambda,{\bm y})}({\bm u})=\frac{c^{-3}Z(\lambda,{\bm y})}{\left(c^{-1}u^0-c^{-2}{\bm y}\cdot {\bm u}\right)^{2\lambda}},
\end{equation}
where $Z(\lambda,{\bm y})$ is a dimensionless normalization factor such that $\|G_\mathrm{eq}\|_{L^1}=1$; that is 
\[
Z(\lambda,{\bm y})=c^3\left(\int\left(\sqrt{1+c^{-2}|{\bm u}|^2}-c^{-2}{\bm y}\cdot {\bm u}\right)^{-2\lambda}d{\bm u}\right)^{-1}.
\] 
Computing the integral we find
\begin{equation}
Z(\lambda,{\bm y})=\frac{(1-|{\bm y}/c|^2)^{\lambda+1/2}\Gamma\left(\lambda\right)}{\pi^{3/2}\Gamma  \left(\lambda-3/2\right)}.
\end{equation}
\end{subequations}
The probability density~\eqref{eqsol} satisfies the constraints~\eqref{sigmay} for all $0<\sigma<c^2\mu/3$ and ${\bm y}\in\R^3$. 
\begin{proposition}\label{convMax}
$G_\mathrm{eq}^{(\lambda,{\bm y})}({\bm u})\to \mathcal{M}({\bm u}-{\bm y})$ as $c\to\infty$, where $\mathcal{M}$ is the Maxwellian density~\eqref{maxwell}.
\end{proposition}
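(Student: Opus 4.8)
The plan is to prove the pointwise convergence by analyzing separately the two $c$-dependent factors in
\[
G_\mathrm{eq}^{(\lambda,{\bm y})}({\bm u})=\frac{c^{-3}Z(\lambda,{\bm y})}{\left(c^{-1}u^0-c^{-2}{\bm y}\cdot {\bm u}\right)^{2\lambda}},
\]
namely the power in the denominator and the partition function $c^{-3}Z(\lambda,{\bm y})$, bearing in mind throughout that $\mu,\sigma$ are kept fixed so that $\lambda=c^2\mu/(2\sigma)$ is a \emph{single} large parameter tending to $\infty$ together with $c$.

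First I would treat the power. Since $c^{-1}u^0=\sqrt{1+c^{-2}|{\bm u}|^2}=1+\tfrac{|{\bm u}|^2}{2c^2}+O(c^{-4})$, completing the square yields, for fixed ${\bm u},{\bm y}$,
\[
c^{-1}u^0-c^{-2}{\bm y}\cdot{\bm u}=1+\frac{|{\bm u}-{\bm y}|^2-|{\bm y}|^2}{2c^2}+O(c^{-4}).
\]
Raising this to the power $-2\lambda=-c^2\mu/\sigma$ and using the elementary fact that $(1+ac^{-2}+O(c^{-4}))^{bc^2}=\exp(ab+O(c^{-2}))\to e^{ab}$ gives
\[
\left(c^{-1}u^0-c^{-2}{\bm y}\cdot{\bm u}\right)^{-2\lambda}\to\exp\left(-\frac{\mu\left(|{\bm u}-{\bm y}|^2-|{\bm y}|^2\right)}{2\sigma}\right)\quad\text{as }c\to\infty.
\]

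Next I would handle $c^{-3}Z(\lambda,{\bm y})$. The factor $(1-|{\bm y}/c|^2)^{\lambda+1/2}$ has logarithm $(\lambda+\tfrac12)\log(1-|{\bm y}|^2/c^2)=-\tfrac{\lambda|{\bm y}|^2}{c^2}+O(c^{-2})$, which tends to $-\tfrac{\mu|{\bm y}|^2}{2\sigma}$ since $\lambda/c^2=\mu/(2\sigma)$. For the Gamma quotient I would invoke the standard asymptotic $\Gamma(\lambda)/\Gamma(\lambda-3/2)\sim\lambda^{3/2}$ as $\lambda\to\infty$, a direct consequence of Stirling's formula. Therefore
\[
c^{-3}Z(\lambda,{\bm y})\sim\frac{e^{-\mu|{\bm y}|^2/(2\sigma)}}{\pi^{3/2}}\cdot\frac{\lambda^{3/2}}{c^3}=\frac{e^{-\mu|{\bm y}|^2/(2\sigma)}}{\pi^{3/2}}\left(\frac{\mu}{2\sigma}\right)^{3/2}=e^{-\mu|{\bm y}|^2/(2\sigma)}\left(\frac{\mu}{2\pi\sigma}\right)^{3/2}.
\]
Multiplying the two limits, the spurious exponential factors $e^{\pm\mu|{\bm y}|^2/(2\sigma)}$ cancel and
\[
G_\mathrm{eq}^{(\lambda,{\bm y})}({\bm u})\to\left(\frac{\mu}{2\pi\sigma}\right)^{3/2}\exp\left(-\frac{\mu|{\bm u}-{\bm y}|^2}{2\sigma}\right)=\mathcal{M}({\bm u}-{\bm y}),
\]
which is the claim. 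The only non-elementary ingredient is the Gamma-ratio asymptotics, and the single subtlety to watch is precisely that $c$ enters simultaneously through the base, the exponent $2\lambda$, and the normalization, so each of the three estimates above must be carried out treating $\lambda=c^2\mu/(2\sigma)$ as the large parameter rather than as independent of $c$. If one additionally wants convergence in $L^1$, it suffices to add a one-line appeal to Scheffé's lemma, since each $G_\mathrm{eq}^{(\lambda,{\bm y})}$ and the limit $\mathcal{M}(\cdot-{\bm y})$ are probability densities on $\R^3$.
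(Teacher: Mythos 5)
Your proof is correct and follows essentially the same route as the paper: the paper likewise splits $G_\mathrm{eq}^{(\lambda,{\bm y})}$ into the normalization $c^{-3}Z(\lambda,{\bm y})$ and the power factor, uses the Gamma-ratio asymptotic $\Gamma(\lambda)\sim\Gamma(\lambda-3/2)\lambda^{3/2}$, and sends $\lambda=c^2\mu/(2\sigma)\to\infty$ to obtain the same two exponential limits whose product is $\mathcal{M}({\bm u}-{\bm y})$. Your completion of the square and the remark on $L^1$ convergence via Scheff\'e are only cosmetic additions.
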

\begin{proof}
Replacing $c=\sqrt{2\sigma\lambda/\mu}$ in~\eqref{eqsol}, and using $\Gamma(\lambda+\alpha)\sim \Gamma(\lambda)\lambda^\alpha$, $\lambda\to\infty$, we find
\[
c^{-3}Z(\lambda,{\bm y})\sim \frac{\left(1-\displaystyle{\frac{\mu|{\bm y}|^2}{2\sigma\lambda}}\right)^{\lambda+1/2}}{(2\pi\sigma/\mu)^{3/2}}\to \left(\frac{\mu}{2\pi\sigma}\right)^{3/2}\exp\left({\displaystyle{-\frac{\mu}{2\sigma}|{\bm y}|^2}}\right),
\]
\[
\left(\sqrt{1+c^{-2}|{\bm u}|^2}-c^{-2}{\bm y}\cdot {\bm u}\right)^{-2\lambda}
\to \exp\left({-\frac{\mu}{2\sigma}(|{\bm u}|^2-2{\bm y}\cdot{\bm u})}\right).
\]
The claim follows.
\end{proof}

The average particles energy $e_{G_\mathrm{eq}^{(\lambda,{\bm y})}}$ and the average momentum ${\bm p}_{G_\mathrm{eq}^{(\lambda,{\bm y})}}$ of the equilibrium distribution~\eqref{eqsol} are finite if and only if
$\lambda>2$, i.e.,
\[
\sigma<\frac{c^2\mu}{4},
\]
in which case they are given by
\begin{align*}
&{\bm p}_{G_\mathrm{eq}^{(\lambda,{\bm y})}}=m\,{\bm \omega},\quad {\bm \omega}=\int {\bm u}\,G_\mathrm{eq}^{(\lambda,{\bm y})}({\bm u})\,d{\bm u}=\frac{2\lambda {\bm y}}{\sqrt{1-|{\bm y}/c|^2}}\,\Theta(\lambda),\\
&e_{G_\mathrm{eq}^{(\lambda,{\bm y})}}=mc\,\varepsilon,\quad \varepsilon=\int u^0G_\mathrm{eq}^{(\lambda,{\bm y})}({\bm u})\,d{\bm u}=\frac{c[(2\lambda-1)+|{\bm y}/c|^2]}{\sqrt{1-|{\bm y}/c|^2}}\,\Theta(\lambda),
\end{align*}
where
\[
\Theta(\lambda)=\frac{\Gamma(\lambda-2)\Gamma(\lambda)}{(2\lambda-1)\Gamma(\lambda-3/2)\Gamma(\lambda-1/2)}.
\]

\section{Nonlinear kinetic diffusion in general relativity}\label{genrelsec}
Let $(M,g)$ be a spacetime, i.e., a four-dimensional time-oriented Lorentzian manifold. The state space of a particle with mass $m>0$ is the seven-dimensional submanifold of the tangent bundle given by
\[
\Pi=\cup_{x\in M}\Pi[x],\quad \Pi[x]=\{u\in T_x M: g_{\mu\nu}u^\mu u^\nu=-1,\ u \text{ future directed}\}
\] 
and thus the kinetic density of particles with position $x\in M$ and four-velocity $u$ is a function
\[
f:\Pi\to [0,\infty).
\]
To keep the discussion in this section as close as possible to the special relativistic case, we introduce an orthonormal frame  $e_{(\mu)}$  such that $e_{(0)}$ is timelike and future pointing. Denoting by $x^\mu=(x^0,{\bm x})$ a (local) coordinates system on $M$ and by $u^\mu=(u^0,{\bm u})$ the components of the particles four-velocity in the frame $e_{(\mu)}$, the state space conditions in the definition of $\Pi[x]$ entail 
\[
u^0=\sqrt{c^2+|{\bm u}|^2}
\]
and $f=f(x,{\bm u})$. Moreover, the components $N^\alpha$ of the number current density and the components $T^{\alpha\beta}$ of the stress energy tensor in the frame $e_{(\mu)}$ are given by~\eqref{N} and~\eqref{T}, respectively.

The analog of~\eqref{eqf2} in the general relativistic case is  
\begin{equation}\label{FPgenrel}
u^\mu e_{(\mu)}^\nu\partial_{x^\nu}f+\tensor{\gamma}{_\mu_\nu}^iu^\mu u^\nu\partial_{u^i}f=u^0\partial_{u^i}\left(\Phi\frac{(h^{-1})^{ij}}{u^0}\partial_{u^j}f+\frac{K^i}{u^0}f\right),
\end{equation}
where $(h^{-1})^{ij}$, $\Phi$, $K^i$ are given by~\eqref{inverseh},~\eqref{ki},~\eqref{sigma2} and $\tensor{\gamma}{_\alpha_\beta^\mu}$ are the Ricci rotation coefficients of the frame $e_{(\mu)}$; that is
\[
\tensor{\gamma}{_\alpha_\beta^\mu}=e_{(\alpha)}^a e_{(\beta)}^b\nabla_ae_b^{(\mu)},
\]
where $e_a^{(\mu)}$ is the co-frame dual to $e_{(\mu)}^a$ and $\nabla_a$ is the Levi-Civita connection (the abstract index notation is employed; see~\cite{Wald}). 
Thus, the only difference with the special relativistic case is the definition of the operator in the left hand side of~\eqref{FPgenrel}, which is now the projection on $\Pi$ of the geodesic flow vector field; see~\cite{C2}. Because of this modification, the local conservation laws of particles number and energy-momentum hold now in the general relativistic form
\[
\nabla_{\alpha}N^\alpha=0,\quad \nabla_\beta T^{\alpha\beta}=0.
\]
As $\nabla_\beta T^{\alpha\beta}=0$, the kinetic diffusion equation~\eqref{FPgenrel} can be coupled to the Einstein equations
\begin{equation}\label{eineq}
R_{\alpha\beta}-\frac{1}{2}g_{\alpha\beta}R=\frac{8\pi G}{c^2}T_{\alpha\beta}.
\end{equation}
In particular, and as opposed to the linear kinetic diffusion equation introduced in~\cite{C}, it is no longer necessary to add a cosmological scalar field to the left hand side to~\eqref{eineq}, or alter the Einstein equations in any other way, because~\eqref{FPgenrel} is now compatible with the (contracted) Bianchi identity  $\nabla_\alpha(R^{\alpha\beta}-\frac{1}{2}g^{\alpha\beta}R)=0$. Applications to cosmology of the new general relativistic diffusion theory will be discussed in a future publication.

\end{document}